\documentclass[aps,physrev,twocolumn,superscriptaddress]{revtex4-2}
\usepackage{mainStyle}

\begin{document}

\title{How much randomness in a quantum process can be explained using memory?}

\author{Derek D. C. Chang}
\email[]{dere0005@e.ntu.edu.sg}
\affiliation{Nanyang Quantum Hub, School of Physical and Mathematical Sciences, Nanyang Technological University, Singapore.}
\affiliation{Centre for Quantum Technologies, Nanyang Technological University, Singapore.}

\author{Graeme D. Berk}
\email[]{graemedean.berk@ntu.edu.sg}
\affiliation{Nanyang Quantum Hub, School of Physical and Mathematical Sciences, Nanyang Technological University, Singapore.}
\affiliation{Centre for Quantum Technologies, Nanyang Technological University, Singapore.}

\author{Mile Gu}
\email[]{gumile@ntu.edu.sg}
\affiliation{Nanyang Quantum Hub, School of Physical and Mathematical Sciences, Nanyang Technological University, Singapore.}
\affiliation{Centre for Quantum Technologies, Nanyang Technological University, Singapore.}
%\homepage[]{Your web page}
%\thanks{}
%\altaffiliation{}

% \date{\today}

\begin{abstract}For a stochastic process describing observations of a dynamical system, complexity science provides systematic methods to decompose the information produced into true irreducible randomness, and that which corresponds to structure superficially disguised as random but can in fact be learned and predicted. Such methods then equip vital tools for prediction, control and inference of the system's internal structure. However, quantum analogs remain underdeveloped due to the inherent complications of invasive measurements and quantum correlations. By harnessing the Choi state representation of process tensors which encode multi-time input-output relations, we formulate a convergent measure of irreducible randomness in stationary quantum stochastic processes. This directly enables a measure of temporal correlations which lower bounds the memory resources required to replicate the process via a recurrent quantum circuit.\end{abstract}

\maketitle

To an observer with limited memory, even highly structured processes can superficially appear random. Distinguishing this finite-memory induced uncertainty from true irreducible randomness is central to prediction, simulation, and control of stochastic processes. In the classical setting, two quantities organize this distinction. The entropy rate quantifies the asymptotic rate of information production~\cite{shannon1948mathematical,cover_elements_2005,crutchfield_regularities_2003}, while the excess entropy quantifies the information shared between the past and future and lower bounds the memory required by the optimal predictive model~\cite{lindgren1988complexity,bialekPredictabilityComplexity2001,crutchfield_regularities_2003,shalizi_computational_2001,crutchfield_times_2009}.
Together, these quantities constrain achievable performance in predictive modeling and learning~\cite{shalizi_computational_2001,bialekPredictabilityComplexity2001}, the thermodynamic manipulation of temporal patterns~\cite{garner_thermodynamics_2017,boyd_correlation-powered_2017}, and facilitate the structural classification of systems between order and chaos~\cite{crutchfield_between_2012}.

We anticipate the utility of these ideas to carry over to the quantum regime, where analogous tools continue to grow in importance. Physical quantum devices inevitably interact with their environments, and therefore operate as open systems, often retaining memory of earlier interactions~\cite{breuer2002theory,breuer2016colloquium,white2020demonstration,morris2022quantifying,karpat2021synchronization}. When memoryless models are adopted, and such effects are neglected, benchmarking and characterization of processes become more challenging, and non-Markovian structure is mistaken for noise~\cite{figueroa2021randomized,figueroa2022towards,
f_kam_detrimental_2025,kam2026spatiotemporal}. Thus, identifying randomness that is merely superficial can enable memory-aware characterization and control, noise-adapted error correction, and non-Markovian noise mitigation~\cite{white2020demonstration,delben2023control,biswas2025noise,wang2025non}. This motivates two basic questions: \emph{how much of the randomness in a quantum process can be explained away, and what memory resources are required to do so?}

Here, we answer these questions by introducing the quantum entropy rate and quantum excess entropy of fully quantum stochastic processes---those whose inputs and outputs are quantum states. 
Cleanly separating irreducible from superficial randomness, the quantum excess entropy arises as the cumulative finite-window overestimate of the irreducible quantum entropy rate, which we show is equal to the quantum mutual information shared between the past and future.
Moreover, we illustrate the physical significance of the quantum excess entropy through a lower bound on the memory resources needed by a recurrent quantum circuit to correctly model the process. 

Our approach addresses gaps in previous attempts to characterize irreducible quantum randomness, in the face of distinct conceptual hurdles due to the interactive nature of quantum systems. An experimenter's choice of control inevitably perturbs the system. Hence, previous studies have either fixed this choice and characterized instead the resulting measured classical process~\cite{crutchfieldIntrinsicQuantumComputation2008, wiesnerNatureComputesInformation2010,suen2017classical,venegas-liMeasurementinducedRandomness2020,venegas2023optimality}, or required optimization over controls~\cite{lindbladNonmarkovianQuantum1979, lindbladQuantumErgodicityChaos1986,connesDynamicalEntropyAlgebras1987,alickiInformationtheoreticalMeaning2002}. 
Meanwhile, an approach agnostic to the choice of control characterizes time series of non-entangled qudits generated by a classical process~\cite{gier_stochastic_2025}. 
Here, we analyze the natural case of underlying quantum dynamics with quantum temporal correlations, via the choice of input states that preserves all information of the process.

\section{Quantum Stochastic Process}
\begin{figure}[ht] \includegraphics[width=0.95\linewidth]{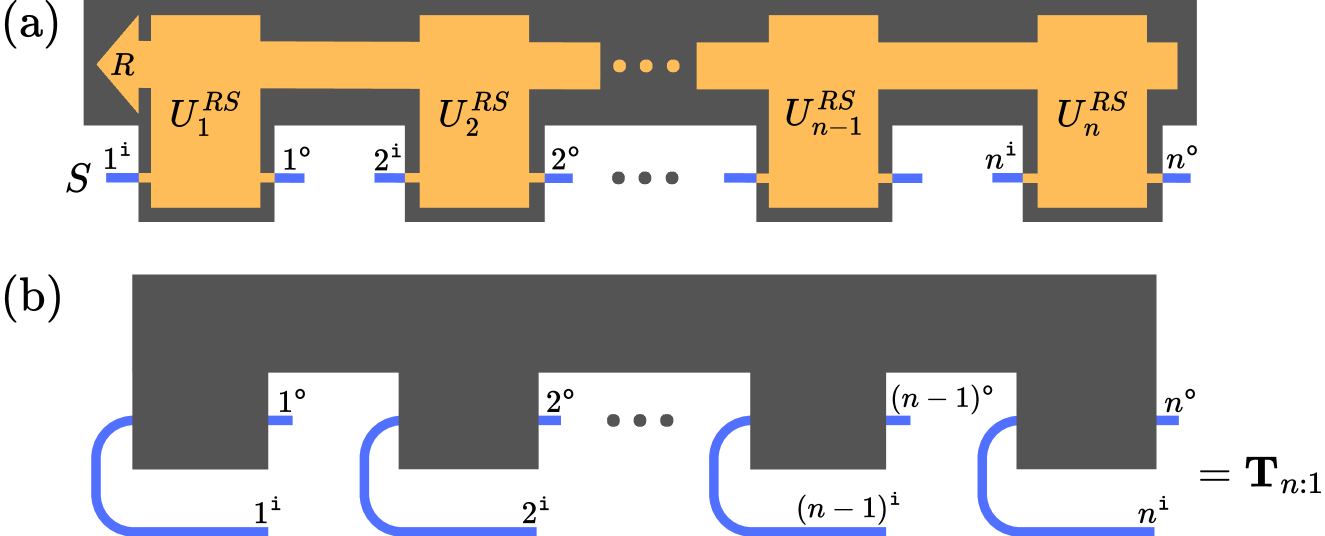} 
    \vspace{-5pt} \caption{\label{fig:ptChoi} 
    (a) A quantum stochastic process encodes all multi-time input-output relations of a system $S$ undergoing open quantum dynamics. The underlying dynamics (orange) may be described by recurrent global unitary evolution of the system and its complement environment $R$. An experimenter may interact arbitrarily with $S$ at each time-step $t$ via input and output arms, for example by state preparation on $t^{\mathtt{i}}$ and measurement on $t^{\mathtt{o}}$. 
    (b) The Choi state $\T_{n:1}$ of a finite window process as the $2n$-body state (blue arms) that results from feeding half a maximally entangled state into $S$ at each step.
    }
\end{figure} 
Our formulation uses the process tensor framework to describe non-Markovian quantum stochastic processes ~\cite{chiribella_theoretical_2009,milz_introduction_2017,
pollock_non-markovian_2018,milz_quantum_2021}. Consider a system $S$ with Hilbert space $\mathcal{H}^{S}$ of fixed dimension $d_S$ that undergoes open quantum dynamics (see Fig.~\ref{fig:ptChoi}a), with which an experimenter can interact repeatedly over multiple time-steps~\footnote{Strictly speaking, each time step is a temporal episode over which the interaction occurs.}. 
All possible multi-time input-output statistics are encoded as a process tensor, which can be uniquely represented by its Choi state, obtained by inputting one arm of a maximally entangled state $\ket{\Phi} \!\equiv\! \sum_{j=0}^{d_S -1} \ket{j}\ket{j} / \sqrt{d_S}$ to $S$ at each time-step $t$ (see Fig.~\ref{fig:ptChoi}b). 
We adopt right-to-left ordering of time indices, where $\T_{n:1}$ denotes a process Choi state over discrete time-steps $1,2,\dots,n$, in alignment with existing literature. 
Denoting the processed arm by $t^{\mathtt{o}}$ and the unprocessed arm by $t^{\mathtt{i}}$, 
$\T_{n:1}$ is then the $2n$-body quantum state over the Hilbert space $\otimes_{t=1}^{n}(\mathcal{H}^{t^{\mathtt{o}}} {\otimes}\mathcal{H}^{t^{\mathtt{i}}})$.  We do not distinguish between the process and its Choi state and consider only normalized Choi states unless otherwise specified.

This process-state equivalence allows us to study temporal properties of the process using information-theoretic tools developed for spatial quantum states. The Choi state $\T_{n:1}$ contains all information about the process, with temporal correlations mapped onto spatial correlations between its subsystems. $\T_{n:1}$ is a state representation of a process that is positive and obeys causality trace conditions: 
\begin{align} 
    \tr_{n^{\mathtt{o}}} \T_{n:1} = \frac{\openone_{n^{\mathtt{i}}}}{d_S} \otimes \T_{n-1:1} \quad\forall n\!\in\! \mathbb{Z}^+, 
\end{align} 
\noindent i.e., there is no operation on output $n^{\mathtt{o}}$ that can influence the past process $\T_{n-1:1}$, reflecting no-signaling backward-in-time. 

Since we are interested in intrinsic structural properties of quantum processes, it is natural to consider those that are stationary, i.e., invariant under time-translation. 
Physically, we take the perspective that an underlying bi-infinite quantum stochastic process $\T$ exists, for which any length-$n$ time window is a finite process $\T_{n:1}$. 
A quantum stochastic process $\T$ is \emph{stationary} if and only if all finite windows are time-translationally invariant:
\begin{align} 
    \T_{n:1} = \T_{n+t:1+t} \quad\forall n\!\in\! \mathbb{Z}^+, t\!\in\! \mathbb{Z}^+ \label{eq:stationarity},
\end{align}
where $\T_{n+t:1+t} \equiv \tr_{t:1}\T_{n+t:1}$ such that accessing a stationary process across any of its finite windows produces equivalent statistics, assuming preceding time-steps were driven with maximally uncertain inputs.
Henceforth, we assume all quantum stochastic processes mentioned are stationary.

\section{Irreducible Randomness}
Classically, the irreducible randomness is quantified by the entropy rate---the average rate of Shannon entropy growth as longer sequences are considered. Given a classical stationary stochastic process governed by random variables $\ldots, X_1, \ldots, X_n, \ldots$, its entropy rate is $\lim_{n\to \infty} \frac{1}{n} H(X_{n:1})$, where $H(X_{n:1})$ is the Shannon entropy of a length-$n$ sequence. It satisfies the key identity
\begin{equation}
    \lim_{n\to \infty} \frac{1}{n} H(X_{n:1}) = \lim_{n\to \infty} H(X_{n}|X_{n-1:1}),\label{eqn:erate}
\end{equation}
\noindent such that it also represents the persisting conditional entropy of the next symbol given the infinite past. 
Fair coin flips have an entropy rate one bit per flip, irreducibly maximal even with unlimited historical data. Meanwhile, a deterministic sequence $\ldots 001001 \ldots$ that emits a $1$ every two $0$s has zero entropy rate, as an observer who has seen a sufficiently long sequence will know the next bit with certainty. See Appendix~\ref{app:classical} for a review of classical entropy rate.

We seek to extend applicability of this fundamental idea into the realm of quantum stochastic processes. 
However, doing so is complicated by the inherently input-output nature of quantum processes. Each round of interaction with $\T$ involves (i) feeding in some input state and (ii) measurement in some basis, both of which can be freely chosen by the experimenter. A prior approach to circumvent this is to fix a specific sequence of experimenter controls, then directly apply classical measures of entropy rate to the resulting sequence of measurement outcomes~\cite{crutchfieldIntrinsicQuantumComputation2008, wiesnerNatureComputesInformation2010,suen2017classical,venegas-liMeasurementinducedRandomness2020,venegas2023optimality}. 
While this approach lets us use classical tools immediately, it does not capture the full picture of quantum entropy rates. Consider a process that applies a Hadamard gate to each input. Since this channel is reversible, one might expect no intrinsic randomness, yet computational-basis inputs and measurements yield a completely random binary sequence with maximal entropy rate. Only by choosing to measure in the complementary basis do we get a classical outcome sequence the expected entropy rate of $0$.

Our approach is to characterize instead the entropy induced on the dilating environment by the dynamics of completely unbiased inputs.
Consider an experimenter implementing uniform probing at each timestep, such that each input $\ket{x_t}$ is drawn from an ensemble of the maximally mixed state.
This avoids bias that masks a process's true behavior---a process that preserves $\ket{0}$ but completely depolarizes $\ket{1}$ would in no way appear random if the experimenter only inputs $\ket{0}$.
We thus ask: how much entropy is induced by such a scheme on an initially pure inaccessible environment $R$ complement to system $S$ (see Fig.~\ref{fig:ptChoi}a)~\footnote{Note that $\mathcal{H}_R$ denotes the environmental degrees of freedom spanning the complement of system $S$ such that $\mathcal{H}_{\text{total}} \equiv \mathcal{H}_S \otimes \mathcal{H}_R$.}? 
Consider first the case where the process is Markovian, such that $\T_{n:1} = \otimes_{t=1}^{n}\T_{t:t}$ where $\T_{t:t}$ is the Choi state of a channel. 
The resulting entropy induced on the pure environment is its von Neumann entropy $S(\T_{t:t})$, which aligns with the map entropy~\cite{zyczkowski2004duality,roga2011entropic,rogaUniversalBoundsHolevo2010,rogaEntropicTradeoff2013,vinjanampathyEntropyBounds2015,kurzykRelatingEntropies2021}---a well-established measure of entropy for quantum channels that is zero if and only if said channel is unitary. Thus, the aforementioned Hadamard process is assigned zero entropy rate.

We observe that this definition naturally extends to any finite window $\T_{n:1}$ of a quantum stochastic process $\T$, where the entropy induced on a pure environment over $n$ time-steps is $S(\T_{n:1})$. We can then define the quantum entropy rate as the average entropy induced in the limit $n \to \infty$.
\begin{definition}\label{def:quantumentropyrate}
    The \emph{quantum entropy rate} of a quantum stochastic process $\T$ is defined as
    \begin{align}
        s_{\T} &\equiv \lim_{n\to\infty} \frac{S(\T_{n:1})}{n}.
        \label{eq:quantumentropyrate}
    \end{align}
\end{definition}
\noindent We note that this quantity was first introduced in~\cite{dowling_operational_2024} as a dynamical entropy in the context of quantum chaos~\cite{lindbladQuantumErgodicityChaos1986,alickiQuantumDynamical2001b}. However, its convergence, an essential property of stationary processes, has not been established. Moreover, a clear notion of irreducible randomness warrants equality with a quantum analog of the conditional form in Eq.~\ref{eqn:erate}.

However, recall that quantum conditional entropy can be negative for spatially-entangled subsystems. One may ask: when considering state representations of temporal processes such as $\T_{n:1}$, can the conditional process entropy $S(\T_{n}\mid \T_{n-1:1}) \equiv S(\T_{n:1}) - S(\T_{n-1:1})$ also be negative~\footnote{We denote $\T_{n} \equiv \T_{n:n} \equiv \tr_{n-1:1}\T_{n:1}$}? We show in Appendix~\ref{app:proofnonnegativecondentropy} that this negativity is in fact forbidden by no-signaling backward-in-time. 
\begin{restatable}{proposition}{nonnegativecondentropy}\label{prop:nonnegativecondentropy}
    \begin{align}
        S(\T_{n}\mid \T_{n-1:1}) \geq 0, \label{eq:nonnegativecondentropy}
    \end{align}
    i.e. $S(\T_{n:1})$ is monotonically non-decreasing in $n$.
\end{restatable}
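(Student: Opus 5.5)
The plan is to derive the inequality directly from the causality trace condition together with the Araki--Lieb triangle inequality, viewing $\T_{n:1}$ as a bipartite state. Split the $2n$ subsystems into $A \equiv n^{\mathtt{o}}$ (the final output arm) and $B$ (all remaining arms, namely $n^{\mathtt{i}}$ and the $2(n-1)$ arms of time-steps $1,\dots,n-1$).

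\textbf{Step 1: the marginal on $B$.} The causality condition gives $\tr_{n^{\mathtt{o}}}\T_{n:1} = \frac{\openone_{n^{\mathtt{i}}}}{d_S}\otimes \T_{n-1:1}$. This is a product state, so its entropy is additive:
\begin{align}
S(\T_B) = \log d_S + S(\T_{n-1:1}).
\end{align}

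\textbf{Step 2: bound the marginal on $A$.} The reduced state on $n^{\mathtt{o}}$ lives in a $d_S$-dimensional space, so $S(\T_A)\le \log d_S$.

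\textbf{Step 3: apply Araki--Lieb.} The Araki--Lieb inequality states $S(AB)\ge |S(A)-S(B)|$, hence $S(AB)\ge S(B)-S(A)$. Combining this with Steps 1 and 2 gives
\begin{align}
S(\T_{n:1}) \ge \log d_S + S(\T_{n-1:1}) - \log d_S = S(\T_{n-1:1}).
\end{align}
This is exactly $S(\T_n\mid\T_{n-1:1})\ge 0$, and monotonicity of $S(\T_{n:1})$ in $n$ follows at once. For $n=1$ the statement reduces to $S(\T_{1:1})\ge 0$ (with $\T_{0:1}$ trivial), which holds automatically.

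\textbf{Expected obstacle.} The argument is short, so the only point needing care is the bookkeeping in Step 1. The maximally mixed input arm $n^{\mathtt{i}}$ contributes exactly $\log d_S$, and this exactly cancels the worst-case entropy $\log d_S$ of the output arm. The point to stress is that it is the no-signalling structure that forces the $n^{\mathtt{i}}$ factor to be maximally mixed and uncorrelated with the past. Without that condition, a generic $2n$-body state could have negative conditional entropy, as for spatially entangled states.
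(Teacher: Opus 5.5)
Your proposal is correct and matches the paper's proof essentially step for step. The causality condition plus additivity gives $S(\tr_{n^{\mathtt{o}}}\T_{n:1}) = \log d_S + S(\T_{n-1:1})$, and the Araki--Lieb inequality together with $S(n^{\mathtt{o}})\le\log d_S$ then yields $S(\T_{n:1})\ge S(\T_{n-1:1})$.
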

\noindent Indeed, unphysical negative growth in process entropy would imply that information is removed from (instead of induced in) the environment. In Appendix~\ref{app:proofquantumentropyrateequivalence}, we use the non-negativity of conditional process entropy to prove its convergence to a finite value in the limit $n \to \infty$, which coincides with the quantum entropy rate.
\begin{restatable}{theorem}{quantumentropyrateequivalence}\label{thm:quantumentropyrateequivalence}
    The quantum entropy rate $s_{\T}$ is finite and
    \begin{align}
        s_{\T} = \lim_{n\to\infty} S(\T_{n}\mid \T_{n-1:1}), \label{eq:stationaryquantumentropyrate}
    \end{align} 
    i.e., the process entropy growth rate $S(\T_{n:1}) - S(\T_{n-1:1})$ converges to $s_{\T}$.
\end{restatable}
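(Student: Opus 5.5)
The plan follows the classical argument for Eq.~\eqref{eqn:erate}: show that the conditional process entropies $c_n \equiv S(\T_n\mid\T_{n-1:1})$ form a bounded, non-increasing sequence. They then converge, and the Cesàro mean $S(\T_{n:1})/n$ converges to the same limit. By Proposition~\ref{prop:nonnegativecondentropy}, $c_n \geq 0$, so the sequence is bounded below.

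For monotonicity I would use strong subadditivity on the Choi state $\T_{n:1}$. Write $c_n = S(\T_{n:1}) - S(\T_{n-1:1})$ and compare it with
\begin{align*}
c_{n-1} = S(\T_{n-1:1}) - S(\T_{n-1:2}) = S(\T_{n:2}) - S(\T_{n-1:2}),
\end{align*}
where the second equality is stationarity, Eq.~\eqref{eq:stationarity}. The quantity $c_{n-1} - c_n$ then has the form $S(AB) + S(BC) - S(ABC) - S(B)$ with $A = $ step $1$, $B = $ steps $2,\dots,n-1$, $C = $ step $n$. This is the conditional mutual information $I(A{:}C\mid B) \geq 0$ of the state $\T_{n:1}$, by strong subadditivity.

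The step I expect to need care is checking that the reduced states are the right objects. The marginal of $\T_{n:1}$ on steps $n,\dots,2$ is by definition $\T_{n:2}=\tr_{1}\T_{n:1}$, which equals $\T_{n-1:1}$ by stationarity. The marginal on steps $n-1,\dots,1$ is not simply $\T_{n-1:1}$: tracing out step $n$ entirely means tracing both $n^{\mathtt o}$ and $n^{\mathtt i}$. By the causality condition, $\tr_{n^{\mathtt o}}\T_{n:1} = \openone_{n^{\mathtt i}}/d_S\otimes\T_{n-1:1}$, so the full trace over step $n$ gives exactly $\T_{n-1:1}$. The entropies used above are therefore those of genuine marginals of one state, and strong subadditivity applies. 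Hence $c_n \leq c_{n-1}$.

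Finiteness and convergence then follow. Since $0\leq c_n\leq c_1 = S(\T_{1:1})\leq 2\log d_S$, the limit $c_\infty = \lim_n c_n$ exists and is finite. Telescoping gives $S(\T_{n:1}) = \sum_{k=1}^n c_k$ (with $S(\T_{0:1})\equiv 0$), and Cesàro's lemma yields $\lim_n S(\T_{n:1})/n = c_\infty$. So $s_\T$ exists, is finite, and equals $\lim_n S(\T_n\mid\T_{n-1:1})$, which is Theorem~\ref{thm:quantumentropyrateequivalence}.
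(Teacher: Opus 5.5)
Your proposal is correct and matches the paper's proof step for step: stationarity plus strong subadditivity gives that $S(\T_n\mid\T_{n-1:1})$ is non-increasing, Proposition~\ref{prop:nonnegativecondentropy} bounds it below by zero so monotone convergence applies, and the telescoping sum with the Ces\`aro mean identifies the limit with $s_{\T}$. Your explicit check, via the causality condition, that tracing out all of step $n$ returns $\T_{n-1:1}$ (so that strong subadditivity is applied to genuine marginals of one state) is correct, and the paper leaves this detail implicit.
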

\noindent This result illustrates $s_{\T}$ as a clear quantum analog of irreducible randomness. Note that non-zero $s_{\T}$ represents the extra entropy induced in the unbounded environment when we evolve the quantum process an extra time-step in the asymptotic limit where $n \to \infty$ time-steps have already occurred. 

Indeed, consider an experimenter who wishes to reduce entropy introduced to the environment by recalling all available information of the process in the past. 
To do so, they note every interaction with the process for time-steps $n{-}1\!:\!1$. Let $x_t$ be the label of the state they input to $\T$, and $y_t$ be the measurement outcome of some POVM. 
Let $z_t \!\equiv\! (x_t,y_t)$, $p_{\overleftarrow{z}}$ be the probability of obtaining outcome sequence $\overleftarrow{z} \equiv z_1,z_2,\ldots z_{n-1}$ and $\T_n^{\overleftarrow{z}}$ the resulting final-step process Choi state conditioned on this sequence. The data processing inequality then implies
\begin{align}
    s_{\T}
    \leq
    S\!\left(\T_n\mid\T_{n-1:1}\right)
    \leq
    \sum_{\overleftarrow{z}}
    p_{\overleftarrow{z}}
    S\!\left(\T_n^{\overleftarrow{z}}\right).
    \label{eq:quantumentropyratebound}
\end{align}
The first inequality follows from monotonic convergence established in Theorem~\ref{thm:quantumentropyrateequivalence}, while the second follows because converting the quantum past into a classical record cannot increase its ability to condition the future. 
Thus the resulting quantum channel, from the perspective of the experimenter, will on average have map entropy that equals or exceeds the quantum entropy rate $s_T$. 

\section{Superficial Randomness}
Having established the quantum entropy rate as a measure of irreducible randomness in a quantum stochastic process, we now identify superficial randomness. Our approach parallels the classical definition of excess entropy (see Appendix~\ref{app:excessentropy}).
Consider $S(\T_{n}\mid \T_{n-1:1})$ which serves as a length-$n$ approximation of the quantum entropy rate $s_{\T}$, for a finite interaction window. 
We hence denote $s_{\T_{n:1}} \!\equiv\! S(\T_{n}\mid \T_{n-1:1})$. 
Since $s_{\T_{n:1}}$ is monotonically non-increasing in $n$ (see Appendix~\ref{app:proofquantumentropyrateequivalence}), it overestimates $s_{\T}$.
This occurs because interacting with only a finite window may leave some temporal correlations unaccounted for, and thus erroneously identified as randomness.
The extent to which this occurs is quantified by the gap $\left( s_{\T_{n:1}} \!- s_{\T} \right)$. 
Accumulating this overestimation over all times yields the quantum excess entropy.
\begin{definition}\label{def:quantumexcessentropy}
    The \emph{quantum excess entropy} of a quantum stochastic process $\T$ is defined as 
    \begin{align}
        \E &\equiv \sum_{n=1}^{\infty} \left( s_{\T_{n:1}} - s_{\T} \right). \label{eq:quantumexcessentropy}
    \end{align}
\end{definition}
\noindent That is, $\E$ represents the total superficial randomness beyond $s_{\T}$ that is eventually explained by uncovering temporal correlations across all lengths.

Since the quantum entropy rate $s_{\T}$ is the asymptotic growth rate of $S(\T_{n:1})$, it is natural to also ask: \emph{how do temporal correlations shape the approach to this limit?}
We prove a direct relationship between the quantum excess entropy and the growth of the process entropy $S(\T_{n:1})$ in  Appendix~\ref{app:quantumexcessentropyinterceptproof}.
\begin{restatable}{proposition}{quantumexcessentropyintercept}\label{prop:quantumexcessentropyintercept}
    \begin{align}
        \E = \lim_{n\to\infty}\left[S(\T_{n:1}) - s_{\T}\, n\right]. \label{eq:quantumexcessentropyintercept}
    \end{align}
\end{restatable}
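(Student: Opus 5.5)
The plan is a telescoping argument on partial sums, in the same way as the classical result relating excess entropy to the subextensive part of block entropy. Set $S(\T_{0:0}) \equiv 0$, so that $s_{\T_{1:1}} = S(\T_{1})$. Then the definition $s_{\T_{n:1}} \equiv S(\T_{n:1}) - S(\T_{n-1:1})$ gives, for every finite $N$,
\begin{align}
    \sum_{n=1}^{N}\left( s_{\T_{n:1}} - s_{\T}\right) = \sum_{n=1}^{N} \left[S(\T_{n:1}) - S(\T_{n-1:1})\right] - N s_{\T} = S(\T_{N:1}) - s_{\T}\,N .
\end{align}
So the $N$-th partial sum of the series defining $\E$ in Definition~\ref{def:quantumexcessentropy} is exactly the bracketed quantity in Proposition~\ref{prop:quantumexcessentropyintercept}. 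Once the meaning of the infinite sum is fixed, taking $N\to\infty$ gives the claim.

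The only real content is whether the limit exists, possibly in the extended reals. Here I would use the monotonicity facts already established. By Proposition~\ref{prop:nonnegativecondentropy} and Theorem~\ref{thm:quantumentropyrateequivalence}, the sequence $s_{\T_{n:1}}$ is non-negative and monotonically non-increasing, as stated in Appendix~\ref{app:proofquantumentropyrateequivalence}. It converges to the finite value $s_{\T}$. Hence every summand $s_{\T_{n:1}} - s_{\T}$ is non-negative, so the partial sums $S(\T_{N:1}) - s_{\T} N$ form a non-decreasing sequence. Such a sequence has a limit in $[0,\infty]$, and this limit equals $\E$ as defined by the series.

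Therefore Proposition~\ref{prop:quantumexcessentropyintercept} holds as an identity in $[0,\infty]$: both sides are finite together or infinite together. The main point to check carefully, and the step most likely to need care, is the first term of the series. Definition~\ref{def:quantumexcessentropy} starts at $n=1$, so $s_{\T_{1:1}}$ must be read as $S(\T_1)$ with an empty conditioning past. This convention is exactly what makes the telescoping sum start from $S(\T_{0:0})=0$ without any leftover boundary term. If the paper intended a different convention, the intercept would shift by a constant, so I would state this convention explicitly at the start of the proof.
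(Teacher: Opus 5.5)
Your proposal is correct and follows the paper's proof: the paper likewise telescopes the partial sums of $\sum_{t}(s_{\T_{t:1}} - s_{\T})$ into $S(\T_{n:1}) - s_{\T}\, n$, using the same convention $S(\T_{t-1:1})|_{t=1} \equiv 0$. You also show that each summand is non-negative, so the limit exists in $[0,\infty]$; the paper leaves this implicit by assuming $\E$ is finite.
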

\noindent When finite, $\E$ represents the y-intercept of the linear asymptote of $S(\T_{n:1})$. 
One may interpret $\E$ as the residual between the actual process entropy and the total process entropy expected from a fully Markovian process with the same growth rate $s_{\T}$. 
In this paper, we assume that $\E$ is finite, and leave the analysis of divergent $\E$ for future work.

Leveraging Prop.~\ref{prop:quantumexcessentropyintercept}, we solidify the interpretation of $\E$ as a measure of temporal correlations in a quantum process.
We prove in Appendix~\ref{app:quantumexcessmutualproof} that $\E$ is also equal to the quantum mutual information (QMI) across the past-future causal split of a quantum stochastic process.
\begin{restatable}{theorem}{quantumexcessmutual}\label{thm:quantumexcessmutual}
    Quantum excess entropy is equal to the QMI across asymptotically long past-future halves,
    \begin{align}
        \E &= \lim_{n\to\infty} I(\T_{F^{(n)}} : \T_{P^{(n)}}), \label{eq:quantumexcessentropymutual}
    \end{align}
    where the length-$n$ past $P^{(n)} \equiv n\!:\!1$ and future $F^{(n)} \equiv 2n \!:\! n{+}1$ are expressed with equal lengths for simplicity.
\end{restatable}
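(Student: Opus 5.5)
The plan is to write the quantum mutual information across the causal split in terms of process entropies and then apply Prop.~\ref{prop:quantumexcessentropyintercept}. By definition,
\begin{align}
    I(\T_{F^{(n)}} : \T_{P^{(n)}}) = S(\T_{2n:n+1}) + S(\T_{n:1}) - S(\T_{2n:1}),
\end{align}
where $\T_{F^{(n)}} = \T_{2n:n+1} \equiv \tr_{n:1}\T_{2n:1}$ and $\T_{P^{(n)}} = \T_{n:1}$. The past marginal needs care because the causality trace condition only discards outputs. Tracing out the future $F^{(n)}$ (outputs and inputs of steps $2n,\dots,n{+}1$) from $\T_{2n:1}$, iterating the causality condition from step $2n$ downward gives exactly $\T_{n:1}$, since each $\tr_{t^{\mathtt{i}}}\openone_{t^{\mathtt{i}}}/d_S = 1$. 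The future marginal $\tr_{n:1}\T_{2n:1}$ equals $\T_{n:1}$ by stationarity, Eq.~\eqref{eq:stationarity}.

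This gives
\begin{align}
    I(\T_{F^{(n)}} : \T_{P^{(n)}}) = 2S(\T_{n:1}) - S(\T_{2n:1}).
\end{align}
Next add and subtract $s_{\T}$ terms:
\begin{align}
    2S(\T_{n:1}) - S(\T_{2n:1}) = 2\left[S(\T_{n:1}) - s_{\T} n\right] - \left[S(\T_{2n:1}) - s_{\T}\,2n\right].
\end{align}
Under the standing assumption that $\E$ is finite, Prop.~\ref{prop:quantumexcessentropyintercept} says $S(\T_{m:1}) - s_{\T} m \to \E$ as $m\to\infty$. The subsequence $m=2n$ has the same limit, so the right-hand side tends to $2\E - \E = \E$, which proves the claim.

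The main obstacle is justifying the marginals, in particular the claim that discarding the future leaves exactly $\T_{n:1}$. One must check that the input arms $t^{\mathtt{i}}$ of future steps appear only as the tensor factor $\openone/d_S$ after the outputs are traced. This holds step by step: at step $2n$ the causality condition yields $\frac{\openone_{2n^{\mathtt{i}}}}{d_S}\otimes\T_{2n-1:1}$, tracing $2n^{\mathtt{i}}$ gives $\T_{2n-1:1}$, and we then recurse down to step $n{+}1$. The future marginal relies on the paper's definition $\T_{n+t:1+t}\equiv\tr_{t:1}\T_{n+t:1}$ together with stationarity. Beyond this, the only subtlety is that the limit in Prop.~\ref{prop:quantumexcessentropyintercept} must exist and be finite, which the paper assumes.
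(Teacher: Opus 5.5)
Your proposal is correct and is essentially the paper's proof: use stationarity to rewrite the QMI as $2S(\T_{n:1}) - S(\T_{2n:1})$, add and subtract $s_{\T}$ terms, and apply Prop.~\ref{prop:quantumexcessentropyintercept} along $n$ and along the subsequence $2n$, using the standing assumption that $\E$ is finite. The one addition is your step-by-step check via the causality condition that the past marginal is exactly $\T_{n:1}$, which the paper assumes without comment.
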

\noindent Thus, $\E$ is established as a measure of temporal correlations of a quantum process---one that arises naturally through the clean separation of superficial from irreducible randomness.

\section{Bounding Memory}
\begin{figure}[tb]
    \includegraphics[width=0.95\linewidth]{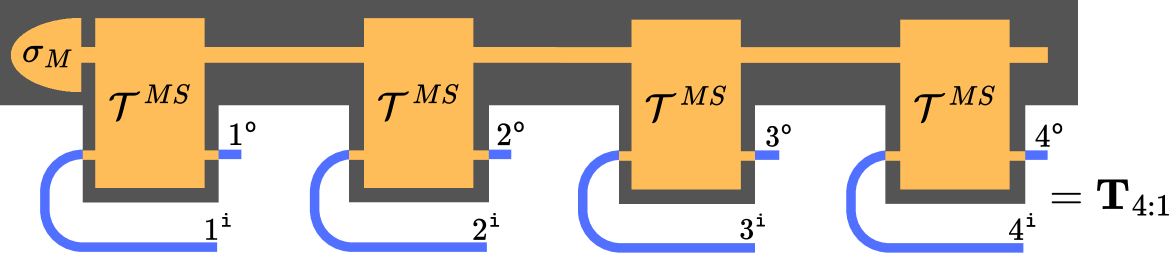} 
    \vspace{-5pt} \caption{\label{fig:ptModel}
    A time-homogeneous recurrent quantum circuit model of the underlying dynamics for a $4$-step window $\T_{4:1}$ of a quantum stochastic process, consisting of a stationary memory state $\sigma_M$ sequentially undergoing fixed system-memory channel $\mathcal{T}^{MS}$. 
    }
\end{figure}

A direct application of quantum excess entropy is its capability to bound the memory resources needed to replicate a quantum process $\T$. In particular, $\T$ is modeled as a quantum system $S$ that undergoes repeated time-homogeneous interactions with a quantum memory system $M$ via some general quantum operation $\mathcal{T}^{MS}$ (see Fig.~\ref {fig:ptModel}). 
The memory system is taken be initialized to stationary state $\sigma_{M} \!\equiv\! \tr_S[\mathcal{T}^{MS}(\sigma_{M} \otimes \openone_S /d_{S})]$ such that it remains invariant under channel $\mathcal{T}^{MS}$ when one remains completely ignorant of the inputs and outputs.
Similar recurrent architectures~\cite{chiribellaQuantumCircuit2008,pollock_non-markovian_2018} appear in quantum reservoir computing~\cite{PhysRevApplied.8.024030}, models of adaptive quantum agents~\cite{elliott_quantum_2022}, and quantum finite-state transducers and generators~\cite{wiesner2008computation}.

A key resource is the amount of information the memory subsystem $M$ must store and propagate from one time-step to the next, which has often been considered a measure of model complexity~\cite{shalizi_computational_2001,crutchfield_synchronization_2010,binder2018practical,thompson_causal_2018}. In classical contexts, where stochastic processes are modeled by finite-state machines, the excess entropy then lower bounds the entropy of their memory states' stationary distribution. In Appendix~\ref{app:excessboundproof}, we establish a quantum analog.

\begin{restatable}{theorem}{excessbound}\label{thm:excessbound}
    Given a quantum stochastic process $\T$ with quantum excess entropy $\E$, and a time-homogeneous recurrent quantum circuit model that reproduces $\T$ with stationary memory state $\sigma_{M}$,
    \begin{align}
        \E &\le 2S(\sigma_{M}).
        \label{eq:excessbound}
    \end{align}
\end{restatable}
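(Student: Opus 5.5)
Via Theorem~\ref{thm:quantumexcessmutual}, it suffices to show that for every $n$,
\begin{align}
I(\T_{F^{(n)}} : \T_{P^{(n)}}) \le 2S(\sigma_M),
\end{align}
and then take $n\to\infty$. The plan is to use the recurrent circuit to build the Choi state of $\T_{2n:1}$ explicitly, so that the memory system $M$ carried between step $n$ and step $n+1$ is the only link between past and future arms. Concretely, purify the stationary memory state as $\ket{\sigma}_{MM'}$ and dilate the channel $\mathcal{T}^{MS}$ to an isometry. Feed half of $\ket{\Phi}$ into $S$ at each step. After the first $n$ steps we obtain a state $\rho_{P M E_P M'}$, where $P$ denotes the past arms $n^{\mathtt{o}}n^{\mathtt{i}}\dots 1^{\mathtt{o}}1^{\mathtt{i}}$ and $E_P$ the dilation environments. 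The future arms $F$ are then produced by a channel $\Lambda_{M\to F}$ (steps $n{+}1$ to $2n$, with fresh maximally entangled inputs) acting only on $M$. Hence
\begin{align}
\T_{2n:1} = (\Lambda_{M\to F}\otimes \mathrm{id}_P)(\rho_{PM}).
\end{align}
The one point to check is that the reduced memory state after $n$ steps is still $\sigma_M$, since the input arms are traced with maximally mixed marginals. This is exactly the stationarity condition imposed on $\sigma_M$.

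Next apply the data processing inequality for quantum mutual information under the local channel $\Lambda$:
\begin{align}
I(F:P)_{\T} \le I(M:P)_{\rho}.
\end{align}
Then use the standard bound $I(M:P)\le 2\min\{S(M),S(P)\}\le 2S(\rho_M)=2S(\sigma_M)$, which follows from $I(M:P)=S(M)+S(P)-S(MP)$ and the Araki--Lieb inequality $S(MP)\ge |S(P)-S(M)|$. Combining the steps gives $I(\T_{F^{(n)}}:\T_{P^{(n)}})\le 2S(\sigma_M)$ uniformly in $n$. Taking the limit and invoking Theorem~\ref{thm:quantumexcessmutual} yields $\E\le 2S(\sigma_M)$.

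The main obstacle I expect is making the factorization $\T_{2n:1}=(\Lambda\otimes\mathrm{id})(\rho_{PM})$ rigorous within the paper's conventions. Three things need care. First, the normalized Choi state must include the $1/d_S$ factors consistently. Second, the window $\T_{2n:1}$ of the bi-infinite process has to coincide with the circuit started in $\sigma_M$; here I would use the model's assumption that it reproduces $\T$, together with stationarity. Third, one must confirm that no classical or quantum correlation reaches the future arms except through $M$. For the last point, I would argue from the circuit's causal structure: the maximally entangled inputs for the future steps are fresh and independent of the past, and the dilation environments $E_P$ are discarded. Discarding $E_P$ only lowers correlations; in any case they never act on $F$.
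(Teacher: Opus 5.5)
Your proposal is correct and follows essentially the same route as the paper: write the future window as a channel $\mathcal{N}^{(n)}_{M_n\to F^{(n)}}$ acting only on the memory after step $n$, apply data processing to get $I(F^{(n)}:P^{(n)})\le I(M_n:P^{(n)})$, bound the latter by $2S(\sigma_{M_n})=2S(\sigma_M)$ via Araki--Lieb and stationarity of the memory, and take $n\to\infty$ using Theorem~\ref{thm:quantumexcessmutual}. The purification of $\sigma_M$ and the Stinespring dilation you introduce are harmless but unnecessary, since everything other than $M$ and the past arms is discarded before the data-processing step.
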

\noindent This result reflects a bottleneck---all past information influencing the future must be mediated through the present. 
The bound is saturated when the system-memory dynamics $\mathcal{T}^{MS}$ is unitary and a subsystem of the past purifies the memory state it induces (see Appendix~\ref{app:excessboundproof}). Thm.~\ref{thm:excessbound} establishes that the quantum excess entropy, as a measure of \emph{apparent} memory in a quantum stochastic process, defines a lower bound on the \emph{stored} memory that any model must use to physically realize the process. 
There is an extra factor of $2$ compared to classical excess entropy, which is a consequence of entanglement, allowing a single qubit to mediate up to two bits of mutual information.

\subsection*{Illustrative Example}
\begin{figure*}
    \centering
    \begin{minipage}{0.33\textwidth}
        \centering
        \includegraphics[width=\linewidth]{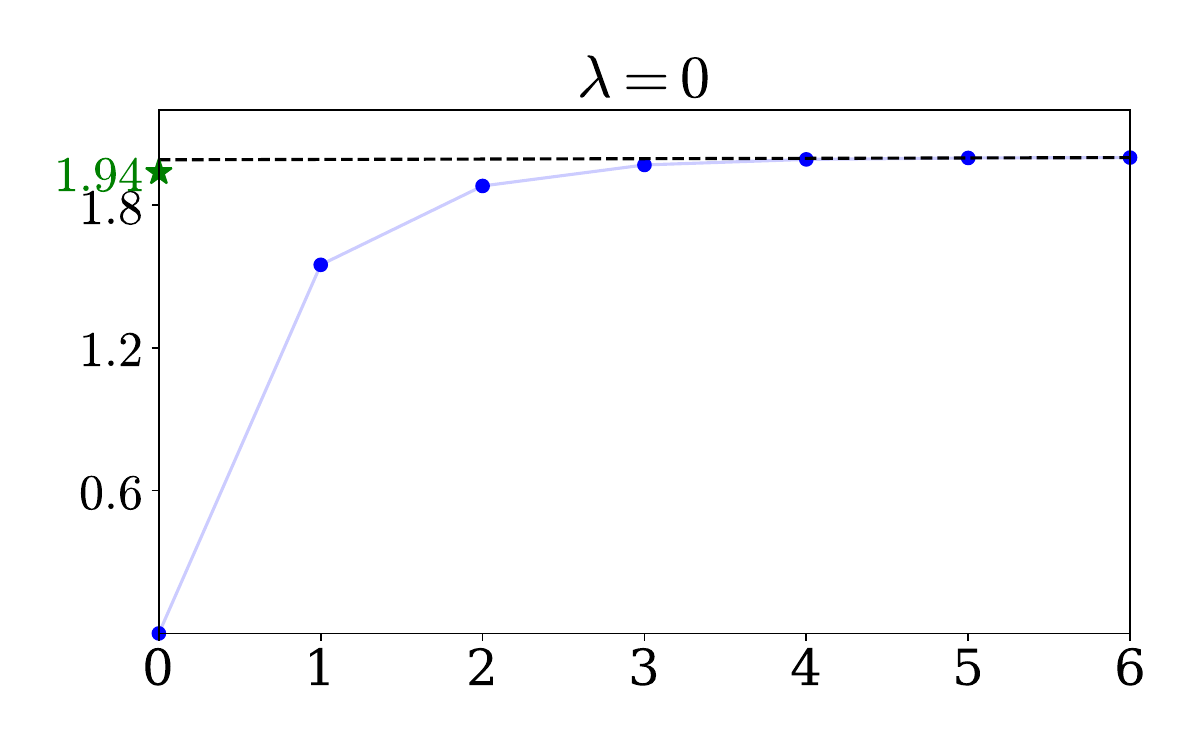}
    \end{minipage}%
    \begin{minipage}{0.33\textwidth}
        \centering
        \includegraphics[width=\linewidth]{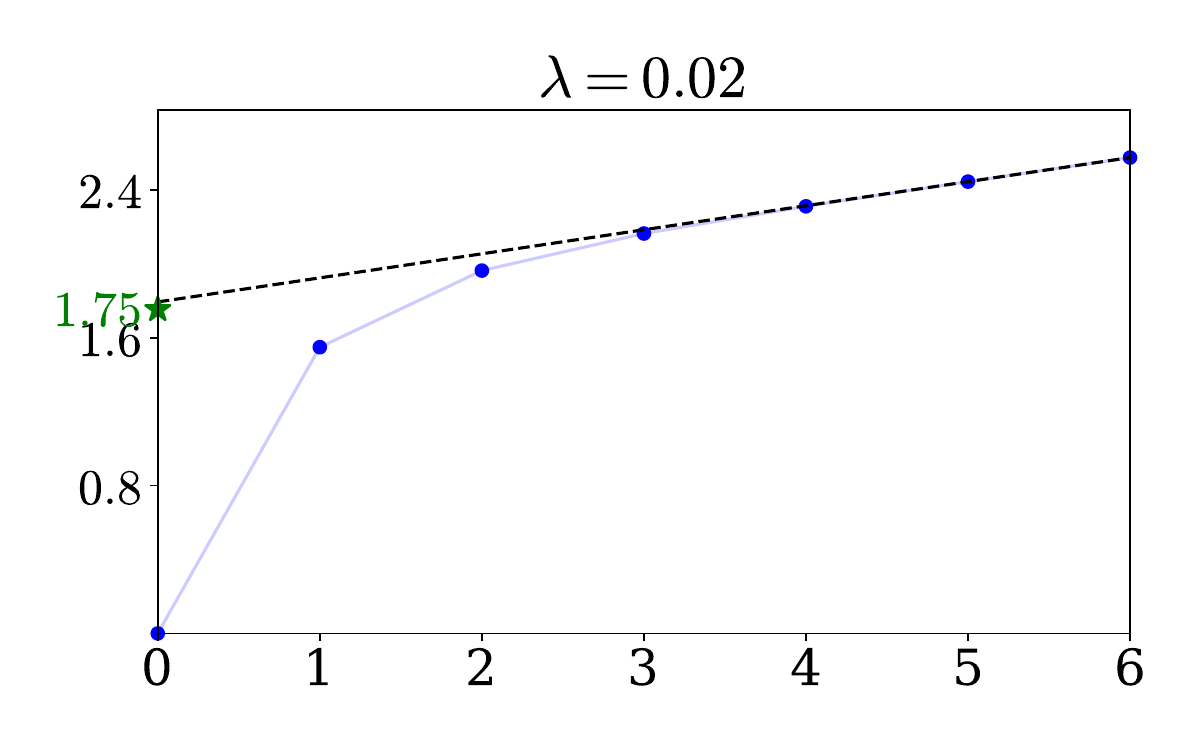}
    \end{minipage}%
    \begin{minipage}{0.33\textwidth}
        \centering
        \includegraphics[width=\linewidth]{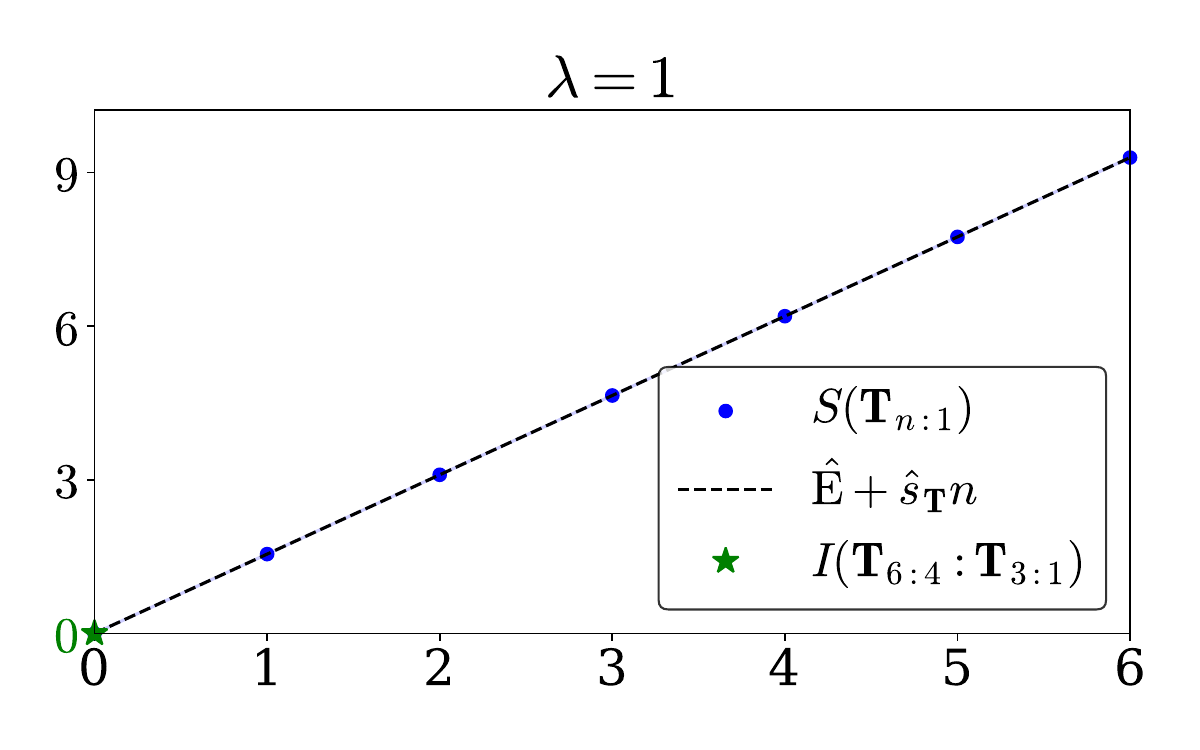}
    \end{minipage}
    \vspace{-10pt}
    \caption{\label{fig:EntropyGrowth}
    $\sqrt{i\text{SWAP}}$ process entropy $S(\T_{n:1})$ (blue points) vs. time-step $n$, for three characteristic regimes: unitary ($\lambda = 0$), partial noise ($\lambda = 0.02$), and full depolarization ($\lambda = 1$).
    Entropy growth rate appears to converge monotonically to the estimated quantum entropy rate $\hat{s}_{\T} = S(\T_6 \mid \T_{5:1})$.
    The quantum excess entropy is estimated as the y-intercept $\hat{\E} = S(\T_{6:1}) - 6\hat{s}_{\T}$ of the linear extrapolation (dashed line), which aligns with $I(\T_{6:4} : \T_{3:1})$ (green star). }
\end{figure*}

To illustrate process entropy growth and quantum excess entropy, we simulate a stationary quantum stochastic process using a recurrent circuit (detailed in Appendix~\ref{app:entropy_growth}).
At each time-step, a system qubit and a memory qubit undergo a unitary $\sqrt{i\text{SWAP}}$, followed by memory depolarization.
Due to $2^{2n}$-dimensional scaling, we compute the process entropy up to $n\!=\!6$ which sufficiently illustrates the qualitative convergence of the process entropy growth rate $S(\T_{n}\mid \T_{n-1:1})$.
We estimate the quantum entropy rate as $\hat{s}_{\T} = S(\T_{6}\mid \T_{5:1})$, and the quantum excess entropy as the linear extrapolation y-intercept $\hat{\E} = S(\T_{6:1}) - 6\hat{s}_{\T}$. 

Fig.~\ref{fig:EntropyGrowth} shows three depolarizing strength ($\lambda$) regimes:
\begin{itemize}
    \item $\lambda \!=\! 0$: The noiseless process exhibits $\hat{s}_{\T} \approx 0$ and saturates the bound in Thm.~\ref{thm:excessbound} with $\hat{\E} \approx 2$ (analytically verified in Appendix~\ref{app:entropy_growth}).
    \item $\lambda \!=\! 0.02$: The intermediate regime demonstrates prototypical monotonic convergence to $\hat{s}_{\T} \!\approx\! 0.13$, with sub-maximal $\hat{\E} \!\approx\! 1.79$.
    \item $\lambda \!=\! 1$: Full depolarization destroys memory-mediated temporal correlations, yielding $\hat{\E} \!=\! 0$ and high $\hat{s}_{\T} \!\approx\! 1.55$.
\end{itemize}
Across all cases, the linear extrapolation suggests approximate convergence to $s_\T$ (Thm.~\ref{thm:quantumentropyrateequivalence}), and alignment of $\hat{\E}$ with the past-future QMI $I(\T_{6:4} : \T_{3:1})$ (Thm.~\ref{thm:quantumexcessmutual}).

\section{Discussion}
Quantum stochastic processes describe diverse phenomena, encompassing essentially any system that repeatedly interacts with an environment or memory system. Examples include quantum reservoir computers~\cite{mujal2021opportunities}, quantum agents~\cite{dunjko2016quantum,elliott2022quantum}, data-re-uploading~\cite{perez2020data}, quantum strategies~\cite{Gutoski_2007}, quantum world models~\cite{lumbreras2026irreducible,jiang2026quantum}, adaptive sensing~\cite{yang2019memory,liu2024fully}, collisions models~\cite{strasberg2019}, sequential entanglement generation~\cite{schon2005sequential} and adaptive quantum error correction~\cite{tanggara2024strategiccodeunifiedspatiotemporal}. Here we provide a universal framework that isolates true irreducible randomness (quantum entropy rate) from structured temporal correlations (quantum excess entropy) with such systems. We show that the quantum excess entropy, as the accumulated superficial randomness beyond the irreducible quantum entropy rate, is equivalent to the past-future quantum mutual information. 
Furthermore, it dictates the fundamental minimum memory resources required by any time-homogeneous recurrent quantum circuit to model the underlying process.
This explicitly links the temporal correlations of a quantum process to the physical resources required for its simulation. 

We expect that our measures of irreducible randomness and temporal correlations will be essential for uncovering the thermodynamic properties of quantum processes.
For example, in work extraction~\cite{Skrzypczyk2014,PerarnauLlobet2015} from quantum sources driven by classical stochastic processes, the entropy rate of the source sets a fundamental upper bound on the extraction rate, while the excess entropy sets the minimum required thermodynamic cost to correlate the agent with the quantum outputs~\cite{huang_engines_2023}. More broadly, the metrics introduced here provide a foundation to establish the fundamental limits on work extraction from fully quantum stochastic processes, through a generalization of a quantum information processing Second Law~\cite{huang_engines_2023,parrondo_thermodynamics_2015,pnas.1204263109,boyd_identifying_2016,boyd_transient_2017,boyd_correlation-powered_2017,PhysRevResearch.2.033334,PhysRevE.105.054131}. 
Recently, non-Markovianity has been identified as a thermodynamic resource that bounds extractable work from general quantum processes~\cite{zambon2025quantum}. It remains to be seen how the quantum excess entropy relates to this bound.

In the spirit of~\cite{crutchfield_regularities_2003}, our measures could similarly pave the way for a structural classification of the complexity of quantum stochastic processes, analogous to the ``spectrum between order and chaos"~\cite{crutchfield_between_2012} found in classical dynamical systems.
A particularly compelling aspect is the study of processes for which process entropy contains a component that grows sublinearly without bound. The resulting infinite excess entropy suggests the presence of unbounded, long-range temporal correlations~\cite{bialekPredictabilityComplexity2001,crutchfield_synchronization_2010}. 
For quantum processes, this necessitates an infinite-dimensional (or continuously expanding) memory.
Quantum models featuring tree-like hierarchical memory structures~\cite{dowling2024capturing}, which obey a power-law decay of these long-range correlations, offer an initial glimpse of highly complex behavior in quantum processes with possibly diverging quantum excess entropy. 

Parallel to complexity, identifying quantum chaos, thermalization, or information scrambling in open quantum systems is also of interest. 
For instance, as a measure of temporal correlations, the quantum excess entropy may be used to characterize the complexity of the influence matrix~\cite{lerose2021influence,cerezo2025spatio} which encodes the local dynamics of a subsystem embedded in a larger many-body system, reminiscent of the process tensor. There, the characteristic scaling behavior of the central existing measure, temporal entanglement (TE)~\cite{lerose2021scaling,giudice2022temporal,foligno2023temporal,taranto_higher-order_2025,vilkoviskiy2025temporal}, has been observed to identify quantum chaotic dynamics~\cite{lerose2021scaling,giudice2022temporal,foligno2023temporal}. 
Recent work shows a close relationship between the scaling of TE and temporal correlations~\cite{vilkoviskiy2025temporal}, but the role of the quantum excess entropy remains to be explored.

\begin{acknowledgments} 
This work is supported by the National Research Foundation of Singapore through the NRF Investigatorship Program (Award No. NRF-NRFI09-0010), the National Quantum Office, hosted in A*STAR, under its Centre for Quantum Technologies Funding Initiative (S24Q2d0009), the RIE 2025 AQAS projects S25Q9D001 and S25Q9D002, the Singapore Ministry of Education Tier 1 Grant RT4/23 and RG91/25 and the RIE25 Japan-Singapore Joint Call on Quantum (Project ID H25-MRO3490).
\end{acknowledgments}

\bibliography{ref.bib}

\newpage
\onecolumngrid
\appendix
\section{Classical Randomness and Memory\label{app:classical}}

We briefly review key information-theoretic quantities that characterize the structural properties of classical stationary stochastic process governed by random variables $\ldots, X_1, \ldots, X_n, \ldots$.
Each random variables $X_t$, has realization $x_t$ in some alphabet $\mathcal{X}$. The stochastic process is characterized by the joint probability distributions of all finite-length sequences $P(X_{n:1})$ for all $n \in \mathbb{Z}^+$, where time indices are understood to have right-to-left ordering, i.e., $X_{n:1}$ represents the process over times $1,2,\dots,n$.
A stochastic process is stationary if its joint distributions are invariant under time translation: $P(X_{n:1}) = P(X_{(n+t):(1+t)})$ for all $t \in \mathbb{Z}^+$. 

The quantities and tools from classical information-theory for stochastic processes will be valuable in the exploration of quantum information-theoretic properties of quantum processes. Note that classical input-output processes---which are a more direct structural analog to quantum processes---have also been studied~\cite{barnett_computational_2015}. However, as we detail in the main text, the Choi formalism allows us to represent input-dependence in a form that is more conceptually similar to output-only processes as presented in this section.
The central quantity is the entropy rate, which is a fundamental measure of irreducible randomness in a classical stochastic process.

\subsection{Entropy Rate\label{app:entropyrate}}
The average information or uncertainty associated with a classical system or a random variable is measured by the Shannon entropy~\cite{shannon1948mathematical,cover_elements_2005} of its probability distribution. This static distribution entropy was generalized to a dynamical entropy associated with evolving dynamical systems by Kolmogorov~\cite{kolmogorov1959entropy, kolmogorov1985new} and Sinai~\cite{sinai1959notion}. The dynamical entropy is also called the Kolmogorov-Sinai entropy (KS entropy). Its equivalent quantity in the context of stochastic processes is the entropy rate (or entropy density)~\cite{cover_elements_2005}. Given a discrete stochastic process as a temporal sequence of (possibly correlated) random variables, its entropy rate is given by 
\begin{align}
    h_X \equiv \lim_{n\to \infty} \frac{1}{n} H(X_{n:1}),
    \label{eq:entropy_rate}
\end{align}
where $H(X_{n:1})$ is the joint entropy of a length-$n$ sequence of random variables.
The entropy rate has an interpretation as the rate of information production as longer sequences are considered.
This quantity has important fundamental and practical significance. In the context of dynamical systems, the similar KS entropy has been shown to be a measure of chaos, as it is under some assumptions equal to the sum of positive Lyapunov exponents~\cite{pesin1977characteristic, ott2002chaos}, which measure the rate of separation of nearby trajectories. It was also formalized as the algorithmic complexity of almost every resulting symbolic sequence of a dynamical system~\cite{brudno1983entropy}. The best known operational interpretation of the entropy rate is given by Shannon~\cite{shannon1948mathematical,cover_elements_2005} as the best achievable average compression rate of an ergodic information source. Similarly, it also quantifies a fundamental limit of prediction~\cite{crutchfield_regularities_2003,shalizi_computational_2001}, as the best asymptotically achievable average uncertainty in predicting the next outcome of a process. That is, for stationary stochastic processes, the entropy rate in Eq.~\eqref{eq:entropy_rate} always converges to a finite limit~\cite{cover_elements_2005}, and can be equivalently expressed with the conditional entropy as 
\begin{align}
    h_X &= \lim_{n\to \infty} H(X_{n}|X_{n-1:1}).
\end{align}
This expression has the interpretation of the average uncertainty about the next outcome, given knowledge of the entire past. This characterizes the persistent unpredictability of the stationary stochastic process as it is the information about the process that cannot be obtained from a history of any length. 
A simple example of a process with finite irreducible randomness is one produced by flipping a fair coin. This i.i.d. process has entropy rate $h_X = H(X_n) = 1$ bit/outcome, which is maximal for a binary alphabet. In comparison, a process with correlations between outcomes would have lower entropy rate as longer sequences reduce uncertainty by uncovering hidden correlations. Importantly, we will see that precisely accounting for this excess randomness beyond the entropy rate gives rise to a measure of memory in stochastic processes, called the excess entropy.

\subsection{Excess Entropy\label{app:excessentropy}}
Consider a finite length-$n$ subsequence of a stationary stochastic process. The finite-$n$ approximation~\cite{crutchfield_regularities_2003,feldman_discovering_2022} of the \emph{true} entropy rate $h_X$ is
\begin{align}
    h_{X_{n:1}} \equiv H(X_{n}|X_{n-1:1}).
\end{align}
This approximation overestimates the true entropy rate by an amount $(h_{X_{n:1}} - h_{X})$. This difference is exactly the extra randomness that could be reduced by considering longer sequences which would reveal more correlations with the future after time $n$. The total amount of temporal correlations can be quantified by aggregating the overestimation of randomness over all time steps.
This total overestimation is the excess entropy~\cite{crutchfield1983symbolic,szepfalusy1986entropy,lindgren1988complexity,crutchfield_regularities_2003}
\begin{align}
    E_X \equiv \sum_{n=1}^{\infty} (h_{X_{n:1}} - h_{X}),
\end{align}
which quantifies the amount of information that must be gained from the process before the per-step randomness remaining is minimal (entropy rate). 
Furthermore, the excess entropy can be shown to be the mutual information between the semi-infinite causal halves (past and future) of a stationary stochastic process~\cite{crutchfield_regularities_2003,feldman_discovering_2022}:
\begin{align}
    E_X = \lim_{n\to\infty} I(X_{2n:n+1} : X_{n:1}).
\end{align}
Thus the excess entropy is a sensible quantifier of the apparent memory in a process, given by the information the observable past shares with the observable future. 

Like the entropy rate, the excess entropy has operational significance in predictive modeling. In particular, in the field of computational mechanics, excess entropy has been shown to be a lower bound on statistical complexity~\cite{shalizi_computational_2001,crutchfield_synchronization_2010}. Statistical complexity is the least amount of information about the past that must be stored as memory states by any model that seeks to produce statistically faithful predictions of the future of a process. The $\epsilon$-machine is the provably optimal predictive model of a stochastic process that achieves this minimum memory requirement~\cite{shalizi_computational_2001}.
In general, the stored memory (statistical complexity) is greater than the apparent memory (excess entropy), as ``typical processes encrypt their state information within their observed behavior''~\cite{crutchfield_times_2009}. 

Our proposals for quantum processes will be heavily grounded in the ideas described in this section. More details on classical stochastic processes and their information-theoretic analysis can be found in~\cite{shaw_dripping_1984,grassberger_toward_1986,bialekPredictabilityComplexity2001,crutchfield_regularities_2003,cover_elements_2005,feldman_discovering_2022}.
Translating these information-theoretic tools to quantum stochastic processes is a key objective which enables the corresponding analysis of the structure of those processes.

\section{Proofs}

\subsection{Proof of non-negativity of conditional process entropy (Prop.~\ref{prop:nonnegativecondentropy})}\label{app:proofnonnegativecondentropy}

We show that despite the possible negativity of quantum conditional entropy for general quantum states, the following quantum conditional entropy is always non-negative for a temporal quantum state constructed as the Choi state of any process tensor $\T_{n:1}$.

\nonnegativecondentropy*
\begin{proof}
    This property is a direct consequence of the causality constraints of process tensors given by $\tr_{n^{\mathtt{o}}} \T_{n:1} = \frac{\openone_{n^{\mathtt{i}}}}{d_S} \otimes \T_{n-1:1}$. 
    By additivity of quantum entropy, we have
    \begin{align}
        S(\tr_{n^{\mathtt{o}}} \T_{n:1}) = \log(d_S) + S(\T_{n-1:1}).
    \end{align}
    The Araki-Lieb triangle inequality~\cite{araki1970entropy} for $S(\T_{n:1})$ is
    \begin{align}
        S(\T_{n:1}) &\geq \left|S(\tr_{n^{\mathtt{o}}} \T_{n:1}) - S(n^{\mathtt{o}})\right| \\
        &\geq \log(d_S) + S(\T_{n-1:1}) - S(n^{\mathtt{o}})
    \end{align}
    Since $S(n^{\mathtt{o}}) \leq \log(d_S)$, we have
    \begin{align}
        S(\T_{n}\mid \T_{n-1:1}) \equiv S(\T_{n:1}) - S(\T_{n-1:1}) \geq 0.
    \end{align}
\end{proof}

\subsection{Proof of quantum entropy rate equivalence (Thm.~\ref{thm:quantumentropyrateequivalence})\label{app:proofquantumentropyrateequivalence}}

We seek a form of quantum entropy rate as an asymptotic entropy of the next step of a process given the past process and show that it is equivalent to Def.~\ref{def:quantumentropyrate} because it indeed converges for stationary quantum processes. 

\quantumentropyrateequivalence*

\begin{proof}
    We first show that the conditional process entropy always converges to a finite value for stationary quantum processes, which allows us to establish the quantum entropy rate as the asymptotic conditional process entropy.
    Stationary quantum processes satisfy the property that $S(\T_{n}\mid \T_{n-1:1})$ is non-increasing in $n$:
    \begin{align}
        S(\T_{n}\mid \T_{n-1:1}) &= S(\T_{n+1}\mid \T_{n:2}) \notag \\
        &\geq S(\T_{n+1}\mid \T_{n:1}), 
    \end{align}
    where the equality holds due to stationarity, and the inequality follows due to strong subadditivity. 
    In other words, the stationary process entropy is discrete concave in $n$: $S(\T_{n:1}) \geq [S(\T_{n+1:1}) + S(\T_{n-1:1})] / 2$.
    Since $S(\T_{n}\mid \T_{n-1:1})$ is non-increasing and also lower bounded by $0$ (Prop.~\ref{prop:nonnegativecondentropy}), by the monotone convergence theorem, its limit in eq.~\eqref{eq:stationaryquantumentropyrate} exists. 
    We remark that the same conclusion can be reached by noting the looser lower bound $S(\T_{n}\mid \T_{n-1:1}) \ge -2\log(d_S)$, which is sufficient for convergence.

    Next, note that $S(\T_{n:1})$ can be decomposed as a telescoping sum of conditional process entropies,
    \begin{align}
        S(\T_{n:1}) &= \sum_{t=1}^{n} \left[S(\T_{t:1}) - S(\T_{t-1:1})\right], \label{eq:quantumconditionalentropychain}
    \end{align}
    where $S(\T_{t-1:1})|_{t=1} = 0$ as the process is defined to start from time $t=1$.
    We seek to show that the average entropy growth rate equals the asymptotic entropy growth rate for stationary quantum processes:
    \begin{align}
        \label{eq:quantumentropyrateequivproof}
        \lim_{n\to\infty}\frac{\sum_{t=1}^{n}\left[S(\T_{t:1}) - S(\T_{t-1:1})\right]}{n}
        = \lim_{n\to\infty} \left[S(\T_{n:1}) - S(\T_{n-1:1})\right].
    \end{align}
    Since we have shown that the limit on the right exists, the Ces\`{a}ro mean~\footnote{\label{foot:Cesaro}Ces\`{a}ro mean: Given $b_N = \sum_{n=1}^{N} a_n /N$, if $a_{N} \to a$ for finite $a$, then $b_N \to a$ too. It roughly states that the average of a sequence tends to the limit of a sequence, if the limit exists.} immediately implies the equality.
\end{proof}

\subsection{Proof of quantum excess entropy as sublinear component (Prop.~\ref{prop:quantumexcessentropyintercept})}\label{app:quantumexcessentropyinterceptproof}

We show a useful expression of quantum excess entropy that is required in the proof of Thm.~\ref{thm:quantumexcessmutual}.

\quantumexcessentropyintercept*
\begin{proof}
    \begin{align}
        \mathrm{E} &\equiv \lim_{n\to\infty}\sum_{t=1}^{n} \left( s_{\T_{t:1}} - s_{\T} \right) \\
        &= \lim_{n\to\infty}\sum_{t=1}^{n} \left[S(\T_{t:1}) - S(\T_{t-1:1}) - s_{\T}\right] \\
        &= \lim_{n\to\infty}\left[S(\T_{n:1}) - s_{\T}\, n\right],
    \end{align}
    where the second equality follows by definition, and we use the telescoping sum in the last equality with $S(\T_{t-1:1})|_{t=1} \equiv 0$.
\end{proof}

\subsection{Proof of quantum excess entropy as past-future QMI (Thm.~\ref{thm:quantumexcessmutual})}\label{app:quantumexcessmutualproof}

We strengthen the interpretation of quantum excess entropy as characterizing apparent memory in a quantum process, by showing that, when finite, it is also the quantum mutual information across the past-future causal split of a quantum stochastic process.

\quantumexcessmutual*
\begin{proof}
    Recall from Prop.~\ref{prop:quantumexcessentropyintercept} that the quantum excess entropy is 
    \begin{align}
        \mathrm{E} &= \lim_{n\to\infty}\left[S(\T_{n:1}) - s_{\T}\, n\right] \equiv \lim_{n\to\infty} \mathrm{E}_{n},
    \end{align}
    where $\mathrm{E}_{n}$ is the sublinear component of $S(\T_{n:1})$.
    Note that since the limit converges to finite $\mathrm{E}$, we also have 
    \begin{align}
        \mathrm{E} &= \lim_{n\to\infty}\left[S(\T_{2n:1}) - s_{\T}\, (2n)\right] \equiv \lim_{n\to\infty} \mathrm{E}_{2n}.
    \end{align}
    By definition of QMI, we have
    \begin{align}
        \lim_{n\to\infty} I(\T_{2n:n+1} : \T_{n:1}) &= \lim_{n\to\infty} \left[ S(\T_{2n:n+1}) + S(\T_{n:1}) - S(\T_{2n:1}) \right] \\
        &= \lim_{n\to\infty} \left[ 2S(\T_{n:1}) - S(\T_{2n:1}) \right] \\
        &= \lim_{n\to\infty} \Bigl\{ 2[S(\T_{n:1}) - s_{\T}\, n] - [S(\T_{2n:1}) -  s_{\T}\, (2n)] \Bigr\} \\
        &= \lim_{n\to\infty} (2\mathrm{E}_{n} - \mathrm{E}_{2n}) \\
        &= 2\lim_{n\to\infty}\mathrm{E}_{n} - \lim_{n\to\infty}\mathrm{E}_{2n} \\
        &= \mathrm{E},
    \end{align}
    where the second equality follows from stationarity of the process, and the penultimate equality holds since both limits exist and are finite. Thus, quantum excess entropy is equal to the past-future QMI if their limits are finite.
\end{proof}

\subsection{Proof of quantum excess entropy bound (Thm.~\ref{thm:excessbound})}\label{app:excessboundproof}

Suppose the underlying dynamics of a finite window of a quantum stochastic process $\T$ with finite quantum excess entropy $\mathrm{E}$, is modeled using a (non-unique) recurrent quantum circuit~\cite{chiribella_theoretical_2009, chiribellaQuantumCircuit2008} with a memory subsystem $M$ undergoing sequential fixed time-homogeneous system-memory CPTP dynamics $\mathcal{T}^{MS}$ at every time step.
The memory may be taken to be initialized in a stationary state $\sigma_{M} \!=\! \tr_S[\mathcal{T}^{MS}(\sigma_{M} \otimes \openone_S /d_{S})]$ that remains invariant under the channel $\mathcal{T}^{MS}$ when the system is fed half of the maximally entangled state (or equivalently, when an experimenter's inputs are maximally uncertain).
We illustrate in Fig.~\ref{fig:ptModel} such a recurrent circuit model for an $n\!=\!4$ window. 
Since all past information influencing the future must be stored and mediated through the model's memory state $\sigma_{M}$, we show the following bound of the quantum excess entropy $\mathrm{E}$.

\excessbound*

\begin{proof}
    First consider a finite window $\T_{2n:1} \equiv \T_{F^{(n)}, P^{(n)}}$ with length-$n$ past $P^{(n)} \equiv n{:}1$ and future $F^{(n)} \equiv 2n\!:\!n{+}1$. 
    Let $M_n$ denote the induced memory subsystem immediately after the $n$-th step of the time-homogeneous recurrent circuit model,
    given that the memory is initialized in $\sigma_M$ and half a maximally entangled state is input at every step.
    Note that the overall recurrent CPTP dynamics that takes the stationary memory state $\sigma_M$ after the $n$-th step to the future process $\T_{F^{(n)}}$ may be expressed as an effective quantum channel $\mathcal{N}^{(n)}_{M_n \to F^{(n)}}$ acting locally on $M_n$.
    The past-future QMI of $\T_{F^{(n)}, P^{(n)}}$ then obeys
    \begin{align}
        I(F^{(n)} : P^{(n)}) \le I(M_n : P^{(n)}) \le 2S(\sigma_{M_n}),\label{eq:qmichain}
    \end{align}
    where the first inequality is a due to QDPI~\cite{wilde2013quantum} which requires that correlations between $P^{(n)}$ and $M_n$ cannot increase under local processing on $M_n$ to obtain $F^{(n)}$. 
    The last inequality follows from the definition of QMI and the Araki--Lieb triangle inequality~\cite{araki1970entropy},
    \begin{align}
    I(M_n : P^{(n)}) &= S(M_n) - S(M_n | P^{(n)}) \\
    &\leq S(M_n) - [-S(M_n)] \\
    &= 2S(\sigma_{M_n}).
    \end{align}
    Since the memory state is stationary $S(\sigma_{M_n})=S(\sigma_M)$, taking the limit $n\to \infty$ gives $\E = \lim_{n\to\infty}I(F^{(n)} : P^{(n)}) \le 2S(\sigma_M)$.
\end{proof}
We see that the first inequality in Eq.~\ref{eq:qmichain} saturates in the long time limit $n\to \infty$ when the induced memory $M_n$ carries the same amount of correlations with the asymptotically long past $P^{(n)}$ as the future $F^{(n)}$ does. This is achieved if the effective channel $\mathcal{N}^{(n)}_{M_n \to F^{(n)}}$ is reversible in the long time limit $n\to \infty$, where the system-memory dynamics $\mathcal{T}^{MS}$ for each step is unitary. 
The last inequality is saturated if and only if the memory $M_n$ induced by the past $P^{(n)}$ is purified by a subsystem of the past, i.e., the joint state on $M_n, P^{(n)}$ is of the form  $\ket{\psi}\bra{\psi}_{M_n, P^{(n)}_A} \otimes \rho_{P^{(n)}_B}$ where $\mathcal{H}^{P^{(n)}} = \mathcal{H}^{P^{(n)}_A} \otimes \mathcal{H}^{P^{(n)}_B}$~\cite{Zhang2011on,xi2012necessary}, in the long time limit $n\to \infty$.

\section{Details of illustrative example\label{app:entropy_growth}}

\begin{figure}[h]
\includegraphics[width=0.55\linewidth]{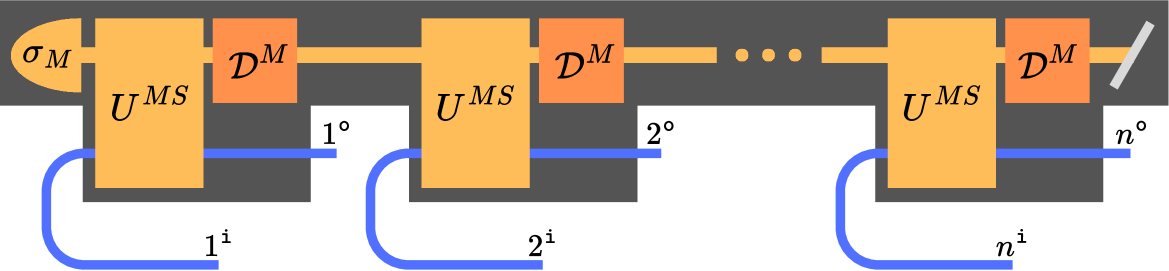}
\caption{\label{fig:EntropyGrowthModel}
    Recurrent quantum circuit model of the depolarized $\sqrt{i\text{SWAP}}$ process.}
\end{figure}

To demonstrate the relationship between the quantum excess entropy and the growth in process entropy, we consider a representative example of a quantum stochastic process with tunable noise. Here we detail the setting of our simulations. 
We compute the Choi state of a process generated by a recurrent quantum circuit with a single qubit memory system $M$, initialized in the maximally mixed stationary state $\sigma_{M} \equiv \frac{\openone}{2}$, interacting sequentially with one half of a Bell state $\ket{\Phi}_{SS'} \equiv \frac{1}{\sqrt{2}}(\ket{0}\ket{0} + \ket{1}\ket{1})$, for $n = 6$ steps. 
This is illustrated in Fig.~\ref{fig:EntropyGrowthModel}. 
The system-memory interaction consists of a unitary $\sqrt{i\text{SWAP}}$, which allows for the transfer of quantum information between the memory and the system to create temporal correlations. This is followed by a depolarizing channel on the memory that introduces noise. The $\sqrt{i\text{SWAP}}$ unitary is given by
\begin{align}
    U^{MS} &\equiv \frac{1}{\sqrt{2}}(\openone + i\text{SWAP})
    ,
\end{align}
where the SWAP operator is defined through its action $\text{SWAP} \ket{\psi}^M \ket{\phi}^S = \ket{\phi}^M \ket{\psi}^S$ for any states $\ket{\psi}$ and $\ket{\phi}$. The depolarizing channel on the memory is defined through its action on an arbitrary state $\rho$ as
\begin{align}
    \mathcal{D}^{M}_\lambda(\rho) = (1-\lambda) \rho + \lambda \frac{\openone}{2},
\end{align}
where we vary the depolarizing parameter $\lambda$ to control the noise strength. The process is stationary by construction, with the memory initialized in the maximally mixed state and the fixed system-memory interaction time-homogeneous. 
We then plot the growth of process entropy $S(\T_{n:1})$ with time for three different depolarizing parameters $\lambda$ for the $\sqrt{i\text{SWAP}}$ process, exemplifying three key classes of qualitative behavior that can be exhibited by a stationary quantum process: $\lambda = 0$ for a noiseless unitary process, $0<\!\lambda\!<1$ for partial noise, and $\lambda = 1$ for complete depolarization. 

\subsection*{Analytical calculation for noiseless case}
The quantum excess entropy for the noiseless $\lambda\!=\!0$ case can be analytically obtained as follows. 
Consider the purification of the initial memory state as $\ket{\Phi}_{M'M} \equiv \frac{1}{\sqrt{2}}(\ket{0}\ket{0} + \ket{1}\ket{1})$ with a reference system $M'$. Since the effective channel on the memory for a single step produces
\begin{align}
    \mathcal{E}(\sigma_M) &= \tr_S \left[ U^{MS} \left( \sigma_M \otimes \frac{\openone_S}{2} \right) \left(U^{MS}\right)^\dagger \right] \\
    &= \frac{1}{2} \sigma_M + \frac{1}{4}\openone_M,
\end{align}
then after $n$ steps, the purification becomes
\begin{align}
    (\openone \otimes \mathcal{E})^{n}\ket{\Phi}\bra{\Phi}_{M'M} &= \left(\frac{1}{2}\right)^{n} \ket{\Phi}\bra{\Phi}_{M'M} + \left(1-\left(\frac{1}{2}\right)^{n}\right) \frac{\openone_{M'M}}{4}
\end{align}
where the first term vanishes as $n\to\infty$. 
Note that for any process modeled by noiseless unitary interaction, the quantum entropy rate is zero, $s_\T = 0$. Thus, due to unitarity of the process, the quantum excess entropy can be expressed as the asymptotic entanglement entropy between $M'M$ and the temporal degrees of freedom, as $\E = \lim_{n\to\infty} S(\T_{n:1}) = S\left(\frac{\openone_{M'M}}{4}\right) = 2$. Hence the quantum excess entropy in this case saturates the bound in Thm.~\ref{thm:excessbound}.

% \twocolumngrid

\end{document}